	\DeclareMathOperator{\A}{\mathcal{A}} 
	\DeclareMathOperator{\pee}{\mathds{P}}																													
	\DeclareMathAlphabet{\mathpzc}{OT1}{pzc}{m}{it} 	
  \theoremstyle{plain}
	\newtheorem{theorem}{Theorem}
\begin{document}

\title{Speakable in Quantum Mechanics}

\author{Ronnie Hermens\thanks{Department of Theoretical Philosophy, University of Groningen, Oude Bo\-te\-ringe\-straat 52, 9712 GL Gro\-ning\-en, Netherlands}}

\maketitle

\begin{abstract}
At the 1927 Como conference Bohr spoke the now famous words ``It is wrong to think that the task of physics is to find out how nature is. Physics concerns what we can say about nature.''
However, if the Copenhagen interpretation really holds on to this motto, why then is there this feeling of conflict when comparing it with realist interpretations?
Surely what one \emph{can} say about nature should in a certain sense be interpretation independent.
In this paper I take Bohr's motto seriously and develop a quantum logic that avoids assuming any form of realism as much as possible.
To illustrate the non-triviality of this motto a similar result is first derived for classical mechanics.
It turns out that the logic for classical mechanics is a special case of the derived quantum logic.
Finally, some hints are provided in how these logics are to be used in practical situations and I discuss how some realist interpretations relate to these logics. 
\end{abstract}

\section{Introduction}
\label{intro}
Over the last few decades much of research in the foundations of quantum mechanics has focused on the impossibility of certain interpretations. 
Results such as the Kochen-Specker theorem \cite{KS67} or the Bell inequalities \cite{Bell64}, \cite{Clauser69} mainly establish what is unspeakable in quantum mechanics.
Often these results are interpreted in favor of instrumentalist or Copenhagen-like interpretations of quantum mechanics.
But as it is well known, switching to an epistemic account of physics alone isn't sufficient to account for the counter-intuitive aspects of quantum mechanics.
Furthermore, these accounts often resort to vagueness when explaining, for example, how quantum mechanics can violate Bell inequalities.
A notorious example is Bohr's account of complementarity.

In this paper I propose a formal reasoning scheme for epistemic approaches in physics in order to shed some light on what is speakable in quantum mechanics.
After all, the more clearly one understands the empirical part of quantum mechanics, the easier the ontological part can be investigated.
To get a feeling for the problems I have in mind, consider the following result from probability logic. 
\begin{theorem}\label{Belllemma}
Suppose $\pee$ is a probability function on a collection of sentences $S$ that satisfies the following rules for all $A,B\in S$:
\begin{enumerate}
\item If $A\vdash B$, then $\pee(A)\leq\pee(B)$.
\item $\pee(A\vee B)\leq\pee(A)+\pee(B)$.
\end{enumerate}
Then, if $S$ obeys classical logic, the following inequality holds for all $A_1,A_2,B_1$ and $B_2$ in $S$:
\begin{equation}\label{Bellineq}
	\pee(A_1\wedge B_1)\leq
	\pee(A_1\wedge B_2)+\pee(A_2\wedge B_1)+\pee(\neg A_2\wedge \neg B_2).
\end{equation}
\end{theorem}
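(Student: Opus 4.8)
The plan is to reduce the inequality to a single logical entailment together with two applications of the given rules. Specifically, I would first establish the classical tautology
\begin{equation*}
A_1\wedge B_1 \;\vdash\; (A_1\wedge B_2)\vee(A_2\wedge B_1)\vee(\neg A_2\wedge\neg B_2),
\end{equation*}
and then push it through rule~1 and (twice) rule~2. Everything else is bookkeeping, so the real content is spotting this entailment; once it is written down it is easily checked.

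To verify the entailment I would argue by cases on the truth values of $A_2$ and $B_2$, assuming $A_1\wedge B_1$ holds. If $B_2$ is true, then $A_1\wedge B_2$ holds because $A_1$ holds. If $B_2$ is false and $A_2$ is true, then $A_2\wedge B_1$ holds because $B_1$ holds. If $B_2$ is false and $A_2$ is false, then $\neg A_2\wedge\neg B_2$ holds. In every case one of the three disjuncts is true, so the entailment is valid in classical logic; since $S$ is assumed to obey classical logic, it is available as a relation $\vdash$ on $S$.

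With the entailment in hand, rule~1 gives
\begin{equation*}
\pee(A_1\wedge B_1)\leq \pee\big((A_1\wedge B_2)\vee(A_2\wedge B_1)\vee(\neg A_2\wedge\neg B_2)\big).
\end{equation*}
Applying rule~2 with the first disjunct split off, and then again to the remaining binary disjunction, yields
\begin{equation*}
\pee\big((A_1\wedge B_2)\vee(A_2\wedge B_1)\vee(\neg A_2\wedge\neg B_2)\big)\leq \pee(A_1\wedge B_2)+\pee(A_2\wedge B_1)+\pee(\neg A_2\wedge\neg B_2),
\end{equation*}
which is exactly \eqref{Bellineq}. (The only mild point to be careful about is that rule~2 is stated for binary disjunctions, so one extends it to the ternary case by associativity of $\vee$ in classical logic, using rule~1 in both directions to see that logically equivalent sentences receive equal probability.)

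I do not expect a serious obstacle here: the proof is essentially the observation that the ``Bell'' structure is encoded in a propositional tautology, and the two rules are precisely the monotonicity and subadditivity needed to turn it into the probabilistic inequality. The pedagogical purpose of the statement is presumably to emphasize that the classicality of $S$ — i.e.\ the availability of that tautology — is what is really doing the work, which is the assumption to be dropped or modified in the quantum setting developed later in the paper.
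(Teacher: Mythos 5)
Your proof is correct: the entailment $A_1\wedge B_1\vdash(A_1\wedge B_2)\vee(A_2\wedge B_1)\vee(\neg A_2\wedge\neg B_2)$ is a classical tautology (your case split on the truth values of $A_2$ and $B_2$ verifies it), and rule~1 followed by two applications of rule~2 then yields \eqref{Bellineq}; your parenthetical remark about fixing a parenthesization of the ternary disjunction is the right level of care. The difference from the paper is one of packaging rather than substance. The paper never states the entailment as a separate lemma; instead it derives it \emph{syntactically} inside the chain of inequalities, by twice inserting an instance of excluded middle ($B_2\vee\neg B_2$, then $A_2\vee\neg A_2$), distributing the conjunction over it, applying subadditivity, and weakening conjuncts. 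You instead isolate the entailment and verify it \emph{semantically} by truth-value cases. Your version is shorter and makes the logical core maximally visible. What the paper's longer chain buys is a precise localization of which classical principles are invoked at which step --- excluded middle and, repeatedly, distributivity --- which is exactly what the subsequent discussion needs when it asks where the derivation breaks down in quantum logic (orthodox quantum logic rejects the distributivity steps; the paper's own proposal rejects the unrestricted excluded-middle insertions). Your case analysis relies on the same principles (bivalence of $A_2$ and $B_2$ is excluded middle in disguise), but it bundles them into a single appeal to classical semantics, so a reader looking for the step to blame would have less to point at.
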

\begin{proof}
The result follows by writing out in the following way:
\begin{equation}
\begin{split}
	\pee(A_1\wedge B_1)
	&=
	\pee(A_1\wedge B_1\wedge(B_2\vee\neg B_2))=\pee((A_1\wedge B_1\wedge B_2)\vee(A_1\wedge B_1\wedge\neg B_2))\\
	&\leq
	\pee(A_1\wedge B_1\wedge B_2)+\pee(A_1\wedge B_1\wedge\neg B_2)\leq\pee(A_1\wedge B_2)+\pee(B_1\wedge\neg B_2)\\
	&=
	\pee(A_1\wedge B_2)+\pee(B_1\wedge\neg B_2\wedge(A_2\vee\neg A_2))\\
	&=
	\pee(A_1\wedge B_2)+\pee((B_1\wedge\neg B_2\wedge A_2)\vee(B_1\wedge\neg B_2\wedge\neg A_2))\\
	&\leq
	\pee(A_1\wedge B_2)+\pee(B_1\wedge\neg B_2\wedge A_2)+\pee(B_1\wedge\neg B_2\wedge\neg A_2)\\
	&\leq
	\pee(A_1\wedge B_2)+\pee(A_2\wedge B_1)+\pee(\neg A_2\wedge \neg B_2).
\end{split}
\end{equation}	
\end{proof} 

It is well-known that quantum mechanics is capable of violating such inequalities. 
To see this one can consider a pair of entangled qubits and set $A_i=[\sigma^A_{r_i}=\tfrac{1}{2}]$, $\neg A_i=[\sigma^A_{r_i}=-\tfrac{1}{2}]$, $B_i=[\sigma^B_{r_i}=\tfrac{1}{2}]$ and $\neg B_i=[\sigma^B_{r_i}=-\tfrac{1}{2}]$ for $i=1,2$, where $\sigma^A_{r_i}$ is the spin along the $r_i$-axis of the qubit send to Alice, and $\sigma^B_{r_i}$ the spin along the $r_i$-axis of the qubit send to Bob.
Now any interpretation (realist or instrumentalist) of quantum mechanics must be able to point out a flaw in this theorem.
For example, non-local theories may argue that a revelation of the truth value of, say, $A_1$ instantaneously causes an altering of the truth values of $B_1$ and $B_2$.
From a Copenhagen perspective one may argue for example that $B_1$ and $B_2$ are complementary sentences which cannot meaningfully occur both in a single sentence, making the proof of theorem \ref{Belllemma} meaningless.
A similar approach would be used by followers of the consistent histories approach or the many worlds interpretation.
And the list can go on.

One of the responses that I find most intriguing is that from orthodox quantum logic \cite{BirkhoffNeumann36}.
In this approach the proof simply fails because it uses the law of distributivity several times; the premises of the theorem do not hold in quantum mechanics. 
The elegance of this approach is that it points to a flaw in the proof precisely there where the proof clashes with quantum mechanical calculations.
But as far as explanations go, this approach only replaces a mystery with another mystery as long as no explanation is given for the failure of distributivity.
In particular it raises questions about the meaning of the logical connectives ``and'' and ``or''. 
It is well argued that their meaning in quantum logic should at least differ from the classical meaning \cite{Dummett76}, but it isn't that clear what it should be instead.
From this perspective, traditional quantum logic has failed in providing a framework for reasoning about quantum mechanical phenomena.

A striking example (due to Popper) of the problem with interpreting logical connectives in quantum logic is the following. 
The law of excluded middle is maintained in quantum logic and therefore, for every proposition $P$ the formula $P\vee\neg P$ is always true. 
Furthermore, for every pair of propositions $P_1$ and $P_2$ it holds that if $P_1$ is true, then $P_1\wedge(P_2\vee\neg P_2)$ is also true. 
However, in quantum logic there are such pairs for which $P_1$ is true, but neither $P_1\wedge P_2$ nor $P_1\wedge\neg P_2$ is true (a failure of the law of distributivity). 
That is, $P_1$ presents itself as a proposition that is both incompatible with $P_2$ and $\neg P_2$ and may thus be thought of as an excluded middle. 
Obviously, this contradiction arose because I held on to a certain interpretation of the logical connectives. 
But the derivation seems innocent enough for me to conclude that no satisfactory interpretation of the quantum logical connectives can be defined. 
Rather, I tend to agree with Popper that
\begin{quote}	``the kind of change in classical logic which would fit what Birkhoff and von Neumann suggest [\ldots] would be the rejection of the law of excluded middle [\ldots], as proposed by Brouwer, but rejected by Birkhoff and von Neumann'' \cite{Popper68}.
\end{quote}

But a crude shift to intuitionistic logic in which the law of excluded middle is simply thrown overboard seems unsatisfactory.
In some cases one simply finds it to be true. 
More specifically, upon a measurement of $\sigma^A_{r_i}$ one knows that either $A_i$ or $\neg A_i$ will be true.
The decidability of a proposition thus depends on the context in which the proposition is formulated.
But in general decidability in one context can't be expected to hold in another context. 
Of course this is already well-known.
Feynman formulated this difficult nature as follows for the two-slit experiment:
\begin{quote}
	``What we must say (to avoid making wrong predictions) is the following. If one looks at the holes or, more accurately, if one has a piece of apparatus which is capable of determining whether the electrons go through hole 1 or hole 2, then one \textit{can} say that it goes either through hole 1 or hole 2. \textit{But}, when one does \textit{not} try to tell which way the electron goes, when there is nothing in the experiment to disturb the electrons, then one may \textit{not} say that an electron goes either through hole 1 or hole 2. If one does say that, and starts to make any deductions from the statement, he will make errors in the analysis. This is the logical tightrope on which we must walk if we wish to describe nature successfully.'' \cite[p. 37-9]{Feynman63}\label{Feyn} 
\end{quote}
The aim is now to provide a logical framework in which this `logical tightrope' has a natural place.

\section{A simple example}

As may be obvious, in this text I take `measurement' as a primitive concept. 
It is my opinion that this is unproblematic as long as one focuses only on the epistemic part of a theory.
For quantum mechanics this seems justified since there is no consensus on the ontological interpretation.
A consequence of this approach is noted by Bell.
\begin{quote}
	``When one forgets the role of the apparatus, as the word `measurement' makes all too likely, one despairs of ordinary logic -- hence `quantum logic'. 
	When one remembers the role of the apparatus, ordinary logic is just fine.'' \cite[p. 34]{Bell90}
\end{quote}
And indeed the logic derived here will not be classical and may thus be conceived as a quantum logic.
However, I would like to go further than Bell and state that, if one finds an ontology for the apparatus in terms of the theory, classical logic seems a necessity; reality forces the law of excluded middle upon us.
In particular I am skeptic about the generalized notion of reality proposed in the topos theoretic approach in \cite{Isham11} or the `quantum numbers' approach in \cite{Corbett09}.
At the opposite end, it may be clear that classical logic is by no means sufficient for a clear ontological description \cite{Baltag11}.

The conceptually most difficult part in the derivation of a quantum logic is to let go of any ontological prejudices.
Therefore I will start with a simple example which will be generalized to classical mechanics.
It turns out that already in these cases the epistemic approach makes the logic significantly more complex.
In the case of classical mechanics the obtained logic can then easily be reduced to the standard logic by assuming the standard ontology.
That is, the interpretation in which observables correspond to certain elements of physical reality that have values at all times.
In the quantum case a similar logic will be developed but, as it is well known, due to the Kochen-Specker theorem the logic can not be reduced to a simpler one by introducing the standard ontology of classical mechanics.
The advantage of taking the detour with classical mechanics is that it shows already what an amount of work it takes to drop the assumption of realism and it shows the similarity between classical and quantum mechanics from the epistemic point of view.

It is the consensus that ``elementary'' propositions in scientific theories are of the form $A\in\Delta$ where $A$ is some observable and $\Delta$ is some subset of the set of all possible measurement outcomes for $A$ (usually taken to be $\mathbb{R}^n$ or some subset thereof). 
Oddly enough there is no consensus on what the proposition $A\in\Delta$ stands for exactly \cite{Isham95}. 
In the traditional viewpoint observables correspond to certain elements of a physical reality and, as such, have a definite value at all times. 
The proposition $A\in\Delta$ is then taken to be a proposition about the definite value of the element of physical reality corresponding to $A$; $A$ has a value in $\Delta$. 
It is an ontological proposition.

The simplest example is a theory describing only one observable $A$ which can assume the values 0 and 1. 
The standard logic for this theory consists of the sentences 
\begin{equation}\label{simplelog}
	\bot=A\in\varnothing~A\in\{0\}~A\in\{1\}~\top=A\in\{0,1\}.
\end{equation}
Equating $\top$ (triviality) with $A\in\{0,1\}$ is based on the assumption that $A$ has a definite value at all times. 
But from the instrumentalist point of view one cannot get to this conclusion; one is only certain of the fact that $A$ has a definite value when one checks that this is indeed the case.
Note that this is not the same as denying that $A$ has a definite value at all times, it is just acknowledging that instrumentally one cannot know this. 
It is based on the credo that ``unperformed experiments have no result''\footnote{This is the title of Peres' article \cite{Peres78} in which he advocates against the use of counterfactual reasoning. This opinion is also reflected in \cite{Peres84} and \cite{Peres02}.}.
So at least the sentences $\top$ and $A\in\{0,1\}$ should be treated as logically distinct.
It seems intuitive that a same argument can be held to take apart $\bot$ and $A\in\varnothing$.
But there is an asymmetry.
Still holding on to interpreting $A\in\Delta$ as ``$A$ has a value in $\Delta$'' the sentence $A\in\{0,1\}$ can be found true upon measurement of $A$. 
The sentence $A\in\varnothing$ on the other hand is a rather meaningless sentence in the instrumentalist approach.
In fact, if one thinks about it, the whole phrase ``$A$ has a value in'' is only well-defined in the context of a measurement.
So whenever the sentence $A\in\varnothing$ has a meaning, it is false.
But then a similar argument can be held for all sentences of the form $A\in\Delta$.

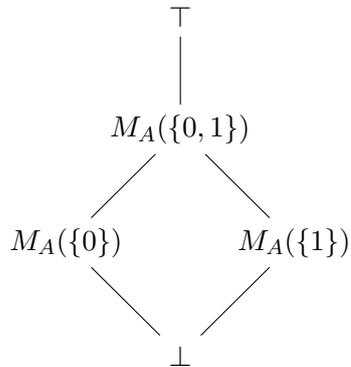
\begin{wrapfigure}{O}{0.5\textwidth}
\begin{center}
\begin{tikzpicture}[scale=1.5]
\path (0:0cm) node (v0) {$\bot$};
\path (1,1) node (v1) {$M_A(\{1\})$};
\path (-1,1) node (v2) {$M_A(\{0\})$};
\path (0,2) node (v3) {$M_A(\{0,1\})$};
\path (0,3) node (v4) {$\top$};
\draw (v0) -- (v1)
(v0) -- (v2)
(v1) -- (v3)
(v2) -- (v3)
(v3) -- (v4);
\end{tikzpicture}\caption{The intuitionistic logic for a simple theory with one observable.}\label{diagram2}
\end{center}
\end{wrapfigure}

The ambiguity that arises from the fact that sentences not always have a meaning is due to the ontological baggage that ``has a value'' brings along.
For the sake of clarity I therefore propose a new interpretation of $A\in\Delta$ and have it stand for ``I have measured $A$ and the result lay in $\Delta$''.
In this interpretation all the sentences in (\ref{simplelog}) have a definite meaning at all time. 
It also follows that $A\in\{0,1\}$ is not trivially true, whereas $A\in\varnothing$ may be considered trivially false.
For this last point I am adhering to the opposite of Peres' credo: ``performed experiments have a result''.
This idealization helps to keep things as simple as possible.
To emphasize this new interpretation I also introduce a new notation. 
The standard $A\in\Delta$ will now be denoted $M_A(\Delta)$. 
The logic resulting from these considerations is depicted in figure \ref{diagram2}.

Some readers may get the feeling they are being tricked into the use of intuitionistic logic so I would like to explain that in the present case the use of this kind of logic is not that strange at all.
Classical logic relies on the idea that there is a fact of the matter as to which every sentence is either true or false.
Intuitionistic logic focuses more on the epistemic view, interpreting `true' as `knowing it to be true'.
The truth of a negation is then read as `knowing it to be false'.
An example in mathematics is the sentence ``$\forall x\in\mathbb{R}:$ $x\geq0$ or $x\leq0$''.
Classically this is a true sentence, but intuitionists emphasize that real numbers can be constructed for which one simply does not know whether $x\geq0$  or $x\leq0$ is the case, and so they reject this sentence.
Platonists often argue that the sentence is true by virtue of the existence of the set of real numbers in a Platonic world. 
In physics intuitionistic logic then seems appropriate if one recognizes that one does not know what the ontology is of the system under investigation.

For the true classical logician these arguments are most likely not convincing and he/she may argue that negation is just wrongly defined here.
In that sense the negation of ``knowing it to be true'' should be ``not knowing it to be true'' (which is what one does in formal epistemic logic).
Much more propositions would have to be added to make the logic classical and most of them are rather dull.
For example, in the mathematical case one would now have propositions about not having a proof of $x\leq0$ for some $x\in\mathbb{R}$.
In physics one would have to add propositions about not performing measurements.
From this perspective the choice for intuitionistic logic in this text is based on simplicity.
So no claim for the necessity of intuitionistic logic is made here.
In particular, I disagree with the idea that logic may be empirical \cite{Putnam69}.

\section{Classical Mechanics}\label{cm}
In classical mechanics the situation immediately becomes more interesting and more complex simply because there is more than one observable.
The task is to identify elementary propositions of the form $M_A(\Delta)$ with mathematical objects in some set $L_{CM}$. 
Logically equivalent propositions may then be identified with the same object in $L_{CM}$. 
But also sentences composed of several of these elementary propositions should be in this set, and $L_{CM}$ should respect the reasoning structure for these sentences.
Consequently, $L_{CM}$ will be a lattice in which meets and joins can be interpreted as the logical connectives `and' and `or'.
The first thing to do is to establish what the total set of elementary propositions is i.e., to investigate which sentences $M_A(\Delta)$ can be formed.

In classical mechanics, an observable $A$ is identified with a function $f_A$ on a phase space $\Omega$ taking values in $V_A\subset\mathbb{R}$, the set of possible measurement outcomes for $A$. 
The laws about these observables are completely captured by the structure of these functions. 
A set of observables $\{A_1,\ldots,A_n\}$ obeys a law if there exists a function $f:V_{A_1}\times\cdots\times V_{A_n}\to\mathbb{R}$ such that
\begin{equation}
	f\left(f_{A_1}(\omega),\ldots,f_{A_n}(\omega)\right)=0,\quad\forall\omega\in\Omega.
\end{equation}
This law may then be written as $f(A_1,\ldots,A_n)=0$. 

It should be noted that I thus focus on the observables in the theory rather than on the phase space. 
In fact, I abandon the ontological assumption that the system at each moment actually finds itself in a certain state $\omega\in\Omega$.
So I also diverge from the standard method of developing a logic for classical mechanics, which is based on this ontological assumption.
The purpose is not to argue against this assumption but to show that, as it turns out, if one drops this assumption, classical mechanics and quantum mechanics become more alike. 

It seems inappropriate to assume that every function from $\Omega$ to some subset of $\mathbb{R}$ should also correspond to an observable. 
However, given some set of observables $\mathpzc{Obs}$, this set can be completed in a certain way, adding more observables. 
For instance, for every observable $A$ and every\footnote{It may disturb the reader that I allow every function and I don't restrict to some class of functions. 
It does indeed seem more appropriate to confine oneself to a certain class. However, the choice of which class is an entire different discussion and I don't have a clear opinion on this. 
The reader is free to choose any class (e.g. Borel, continuous, computable, etc.) for her/himself and it won't effect the present discussion.} 
$f:V_A\to\mathbb{R}$ the function $f\circ f_A$ can also be considered an observable corresponding to a measurement of $A$ and then applying $f$ to the outcome. 
More generally, for every sequence of observables $A_1,\ldots,A_n$ and every $f:V_{A_1}\times\cdots\times V_{A_n}\to\mathbb{R}$ one can consider the observable identified with a simultaneous measurement of $A_1,\ldots,A_n$ and then applying $f$ to the outcome string. 
A set of observables will be called complete if it contains all such additional observables (i.e. it is closed under the application of functions).
I will assume from now on that $\mathpzc{Obs}$ is completed in this sense.

The set of elementary propositions for classical mechanics is now given by
\begin{equation}
	EP_{CM}:=\{M_A(\Delta)\:;\:A\in\mathpzc{Obs},\Delta\subset V_A\}.
\end{equation}
In the end, this set should in some way be related to the logic $L_{CM}$ of the theory. 
To figure out what this logic should be it is best to investigate how this logic should behave when restricting to elementary propositions.
In other words, it is to be investigated what an appropriate preorder would be for the set $EP_{CM}$ both interpretation wise and mathematically (for indeed, the elementary propositions have now been identified with mathematical objects).

A consequence of the functional relationships between observables is that the set of possible measurement outcomes for an observable isn't of much importance, but rather the partition of $\Omega$ that is generated by it, i.e. the set
\begin{equation}
	P_A:=\{f_A^{-1}(x)\:;\:x\in V_A\}.
\end{equation}
In particular, if two observables generate the same partition, then they are completely equivalent. 
That is, a measurement of the one allows one to completely determine the outcome of the measurement of the other and vice versa. 
Note that this is the case for two observables $A_1$ and $A_2$ iff there exists an invertible function $f:V_{A_1}\to V_{A_2}$ such that $f\circ f_{A_1}=f_{A_2}$ and $f^{-1}\circ f_{A_2}=f_{A_1}$. 
Consequently, a sufficient condition for two propositions $M_{A_1}(\Delta_1)$ and $M_{A_2}(\Delta_2)$ to be logically equivalent is $P_{A_1}=P_{A_2}$ and $f_{A_1}^{-1}(\Delta_1)=f_{A_2}^{-1}(\Delta_2)$.
This logical equivalence is a first indication of what the pre-order on $EP_{CM}$ should look like.
In the end two propositions $M_{A_1}(\Delta_1)$ and $M_{A_2}(\Delta_2)$ are then logically equivalent iff
\begin{equation}
	M_{A_1}(\Delta_1)\leq M_{A_2}(\Delta_2)~\text{and}~M_{A_2}(\Delta_2)\leq M_{A_1}(\Delta_1).
\end{equation}
In the following I define $\leq$ by investigating sufficient and necessary conditions for concluding $M_{A_1}(\Delta_1)\leq M_{A_2}(\Delta_2)$ for any pair of propositions.
Obviously, $\leq$ must therefore be interpreted as ``logically implies''. 

The set of all partitions generated by $\mathpzc{Obs}$ 
\begin{equation}
	L_{\mathpzc{Obs}}:=\{P_A\:;\:A\in\mathpzc{Obs}\},
\end{equation}
will play an important role. It is turned into a lattice by the following definitions:
\begin{subequations}
\begin{equation}
	P_{A_1}\leq P_{A_2}\quad\Longleftrightarrow\quad \forall \Delta_1\in P_{A_1},\exists \Delta_2\in P_{A_2}:\Delta_1\subset\Delta_2,
\end{equation}
\begin{equation}\label{partitionmeet}
	P_{A_1}\wedge P_{A_2}:=\{\Delta_1\cap\Delta_2\:;\:\Delta_1\in P_{A_1},\Delta_2\in P_{A_2}\},
\end{equation}
\begin{equation}\label{partitionjoin}
	P_{A_1}\vee P_{A_2}:=\bigwedge\{P_A\:;\:A\in \mathpzc{Obs}, P_{A_1}\leq P_A, P_{A_2}\leq P_A\}.
\end{equation}
\end{subequations}
Note that $P_{A_1}\leq P_{A_2}$ iff there is a surjective function $f:V_{A_1}\to V_{A_2}$. 
It should be checked that the operations (\ref{partitionmeet}) and (\ref{partitionjoin}) again correspond to elements of $L_{\mathpzc{Obs}}$. 
This indeed follows from the criterion that $\mathpzc{Obs}$ was completed. 
For example, $P_{A_1}\wedge P_{A_2}$ corresponds to the partition of any observable $A$ for which there is an invertible function $f:V_{A_1}\times V_{A_2}\to V_A$. 
Arbitrary meets can also be defined and consequently, there is also a bottom element $P_{\mathpzc{Obs}}$ corresponding to the measurement of all observables. 
Note that if $\mathpzc{Obs}$ would include all functions, this bottom element becomes the partition of $\Omega$ in singleton sets $\{\{\omega\}\:;\:\omega\in\Omega\}$.
The top element of $L_{\mathpzc{Obs}}$ is given by the partition $\{\Omega\}$.

There is a direct physical significance to this lattice.
The existence of a surjective function $f:V_{A_1}\to V_{A_2}$ implies that there is a law associating with each outcome of the measurement of $A_1$ a distinct outcome for the measurement of $A_2$.
Certainly, within the theory, these laws are true indefinitely and therefore performing a measurement of $A_1$ and then applying the function $f$ to the outcome counts as a proper measurement of $A_2$.
By the same reasoning, suppose $P_{A_3}=P_{A_1}\wedge P_{A_2}$. 
Then a measurement of $A_3$ counts as a measurement of both $A_1$ and $A_2$.
And in the case that $P_{A_3}=P_{A_1}\vee P_{A_2}$, then both a measurement of $A_1$ counts as a measurement of $A_3$ and a measurement of $A_2$ counts as one for $A_3$. 

Besides the structure of $L_{\mathpzc{Obs}}$, the lattices of subsets of the outcome sets also play an important role. 
If one keeps the observable $A$ fixed, it is clear that one should have that if $\Delta\subset\Delta'\subset V_A$ then
\begin{equation}\label{subss}
	M_A(\Delta)\leq M_A(\Delta').
\end{equation}
So this establishes what the preorder on $EP_{CM}$ should be when restricting to a single observable.
To investigate the relation between elementary propositions concerning distinct observables consider first the situation where two observables $A_1$ and $A_2$ satisfy the relation $P_{A_1}\leq P_{A_2}$.
From earlier considerations I argued that this together with the assumption $f_{A_1}^{-1}(\Delta_1)=f_{A_2}^{-1}(\Delta_2)$ is sufficient to conclude that $M_{A_1}(\Delta_1)$ implies $M_{A_2}(\Delta_2)$.
Combining this with (\ref{subss}) results in the conclusion
\begin{equation}\label{vglx}
	\left(P_{A_1}\leq P_{A_2} \text{ and } f_{A_1}^{-1}(\Delta_1)\subset f_{A_2}^{-1}(\Delta_2)\right)
	\Longrightarrow
	M_{A_1}(\Delta_1)\leq M_{A_2}(\Delta_2).
\end{equation} 

In this situation the existence of a particular functional relation (law) between $A_1$ and $A_2$ is used to argue that a measurement of $A_1$ counts as a measurement of $A_2$. 
But does conversely a measurement of $A_2$ furnish any information about $A_1$?
In common texts on logics for classical mechanics the answer is definitely yes. 
The proposition $M_{A_2}(\Delta_2)$ would then imply $M_{A_1}(f^{-1}(\Delta_2))$. 
This conclusion is based on the reasoning that a measurement of $A_2$ reveals information about the state $\omega\in\Omega$ in which the system finds itself, 
and in all the possible states in which $M_{A_2}(\Delta_2)$ is true, the proposition $M_{A_1}(f^{-1}(\Delta_2))$ is also true. 
Here one interprets $M_A(\Delta)$ as a proposition about a property of the system.
However, this use of terminology is unsuiting if one abandons the realist interpretation of the state of a system. 
In fact, it may be clear by now that if one focuses on the concept of measurement the proposition $M_{A_1}(V_{A_1})$ is a stronger one than $M_{A_2}(V_{A_2})$ if $P_{A_1}\leq P_{A_2}$.  

So in this light it seems natural to just define $\leq$ by replacing `$\Longrightarrow$' by `$\Longleftrightarrow$' in (\ref{vglx}). 
I would say this is almost correct. 
There are actually situations in which the order structure of $L_{\mathpzc{Obs}}$ doesn't matter, and that is when the elementary proposition itself is a contradiction: $M_A(\varnothing)$. 
Remember that this is a consequence of the idealization that performed experiments have a result. 
With this idealization the preorder on $EM_{CM}$ becomes
\begin{equation}
\begin{gathered}
	M_{A_1}(\Delta_1)\leq M_{A_2}(\Delta_2)\\ 
	\Longleftrightarrow\\ 
	\left(P_{A_1}\leq P_{A_2} \text{ and }f^{-1}_{A_1}(\Delta_1)\subset f^{-1}_{A_2}(\Delta_2)\right)\text{ or }\Delta_1=\varnothing.
\end{gathered}
\end{equation}
This preorder leads to a non-trivial equivalence relation, and the set of equivalence classes can now be characterized by the set
\begin{equation}
	S_{CM}:=\{(P,\Delta)\:;\:P\in L_{Obs}, \varnothing\subsetneq\Delta\subset\Omega, P\leq\{\Delta,\Delta^c\}\}\cup\{\bot\},
\end{equation}
where $\bot$ corresponds with the equivalence class $\{M_A(\varnothing)\:;\:A\in\mathpzc{Obs}\}$.
The inherited partial order on $S_{CM}$ takes the form
\begin{equation}
	(P_1,\Delta_1)\leq (P_2,\Delta_2)\Longleftrightarrow P_1\leq P_2\text{ and }\Delta_1\subset\Delta_2,
\end{equation}
and of course $\bot\leq (P,\Delta)$ for all $(P,\Delta)$.
It should be noted that already $S_{CM}$ has a much richer structure than the standard logic for classical mechanics; there is no way to associate every $(P,\Delta)$ with a subset of the state space in a consistent way without `forgetting about $P$'.

Now in the end the set $S_{CM}$ of equivalence classes of elementary propositions will be considered as a subset of the logic $L_{CM}$.
But $L_{CM}$ also includes disjunctions and conjunctions of sentences in $S_{CM}$ and may therefore be larger than $S_{CM}$. 
It must be investigated if this is indeed the case, and I will start with conjunctions.

Consider two observables $A_1,A_2\in\mathpzc{Obs}$. 
Then for any pair of sets $\Delta_1,\Delta_2$ the proposition $M_{A_1}(\Delta_1)\wedge M_{A_2}(\Delta_2)$ is read as ``I have measured $A_1$ and the result lay in $\Delta_1$ and I have measured $A_2$ and the result lay in $\Delta_2$''. 
If the use of the second `and' in this sentence is in any sense similar to the uses of the other `and'-s (a natural requirement), this may also be read as ``I have measured $A_1$ and $A_2$ and the result of the first lay in $\Delta_1$ and the result of the second in $\Delta_2$''.
By assumption there is an observable $A_3$ whose measurement counts as a measurement of both $A_1$ and $A_2$ and that satisfies $P_{A_3}=P_{A_1}\wedge P_{A_2}$.
The above sentence is then equivalent to the sentence ``I have measured $A_3$ and $f_1$ applied to the result lay in $\Delta_1$ and $f_2$ applied to the result lay in $\Delta_2$'', where $A_1=f_1(A_3)$ and $A_2=f_2(A_3)$.
But this is just the same as saying that ``I have measured $A_3$ and the result lay in $f_1^{-1}(\Delta_1)\cap f_2^{-1}(\Delta_2)$'', which again corresponds to an elementary proposition.
In conclusion, in $L_{CM}$ it should at least hold that
\begin{equation}
	(P_1,\Delta_1)\wedge(P_2,\Delta_2)=\begin{cases} (P_1\wedge P_2,\Delta_1\cap\Delta_2),& \Delta_1\cap\Delta_2\neq\varnothing\\ \bot, &\text{else.}\end{cases}
\end{equation}

So conjunctions are pretty much what one would expect.
Disjunctions on the other hand are more difficult because they cannot be imbedded within the partial ordered set of elementary propositions. 
At least, not without running into interpretational difficulties. 
There is of course the option to define it as the least upper bound given the definition of conjunctions:
\begin{equation}
\begin{split}
	(P_{A_1},\Delta_1)\vee(P_{A_2},\Delta_2)
	&:=
	\bigwedge\left\{(P_A,\Delta)\in S_{CM}\:;\: \substack{(P_{A_1},\Delta_1)\leq (P_{A},\Delta),\\ (P_{A_2},\Delta_2)\leq (P_{A},\Delta)}\right\}\\
	&=
	(P_{A_1}\vee P_{A_2},\Delta_1\cup\Delta_2)
\end{split}
\end{equation}
but it seems inappropriate to identify a measurement of $A_1$ or $A_2$ with a measurement of neither.
The present interpretation demands something stronger.
But it is also impossible to identify the disjunction with a joint measurement of the two observables. 
This is too strong a demand and in conflict with the partial order on $S_{CM}$ (i.e. logically inconsistent).
Rather, the aim is to broaden the lattice such that
\begin{equation}
	\left(P_{A_1}\wedge P_{A_2},\Delta_1\cup \Delta_2\right)
	< 
	(P_{A_1},\Delta_1)\vee(P_{A_2},\Delta_2)
	<
	\left(P_{A_1}\vee P_{A_2},\Delta_1\cup \Delta_2\right).
\end{equation}
That is, the disjunction of two elementary propositions is simply no longer an elementary proposition.

The set of propositions increases immensely with this step, for it now also includes all propositions of the form
\begin{equation}\label{disjx}
	\bigvee_{A\in O}(P_A,\Delta_A),\quad O\subset\mathpzc{Obs}, (P_A,\Delta_A)\in S_{CM}.
\end{equation}
Fortunately, conjunctions of such propositions can then again be defined by postulating that the lattice should be distributive.
The equality
\begin{equation}
	 \bigvee_{A_1\in O_1}(P_{A_1},\Delta_{A_1})\wedge \bigvee_{A_2\in O_2}(P_{A_2},\Delta_{A_2})
	 =
	 \bigvee_{\substack{A_1\in O_1\\ A_2\in O_2}}(P_{A_1},\Delta_{A_1})\wedge(P_{A_2},\Delta_{A_2})
\end{equation}
will then simply be taken as the definition of the left-hand side.
I will now investigate how the set of propositions can be fully specified. 

Note that every proposition of the form (\ref{disjx}) can be written as a disjunction over all observables simply by introducing a somewhat sloppy notation and taking disjunctions with $(P_A,\varnothing)=\bot$ for all $A\notin O$. 
Of course this is a cumbersome way of expressing a proposition but it has the advantage that the same notation can be used for every propositions, elementary or not.
More specifically, every proposition can be written as a function $S:L_{\mathpzc{Obs}}\to \mathcal{P}(\Omega)$ (the power set of $\Omega$) with the restriction that for all $A$ $(P_A,S(P_A))\in S_{CM}$. 
This function can then be identified with a proposition in the following way:\footnote{The introduction of these functions is purely for mathematical convenience. 
One may also formally introduce the set of objects of the form of (\ref{disjx}) and then introduce disjunctions that are consistent with this notation.}
\begin{equation}\label{intp}
	S\simeq \bigvee_{P\in L_{\mathpzc{Obs}}}(P,S(P)).
\end{equation}
Clearly the set
\begin{equation}
	F:=\{S:L_{\mathpzc{Obs}}\to \mathcal{P}(\Omega)\:;\:(P,S(P))\in S_{CM}\text{ or }S(P)=\varnothing\}
\end{equation}
with its interpretation (\ref{intp}) is rich enough to incorporate the desired disjunctions, but it also needs to be narrowed down for it contains many sentences that are logically equivalent.

With a single proposition $(P,\Delta)$ several functions can be identified. 
Indeed, any function will do as long as $S(P)=\Delta$ and $(P',S(P'))\leq (P,\Delta)$ for all $P'\in L_{\mathpzc{Obs}}$.
Note that this necessitates that $S(P')=\varnothing$ whenever neither $P\leq P'$ nor $P'\leq P$.
That is, as long as $S$ represents the disjunction of $(P,\Delta)$ with sentences that are all stronger than $(P,\Delta)$, then the meaning of $S$ is equivalent to $(P,\Delta)$.
A special procedure for constructing such a function is taking for each $S(P')$ the highest value possible in accordance with the interpretation of $(P,\Delta)$. 
That is, $(P',S(P'))$ is the weakest sentence that is still stronger than $(P,\Delta)$.
The function obtained in this way will be denoted $S_{(P,\Delta)}$ and it satisfies
\begin{equation}
	S_{(P,\Delta)}(P'):=\begin{cases}
	\Delta,& P'\leq P\\
	\varnothing, \text{else}.
	\end{cases}
\end{equation}
In the same line, the function $S_{\bot}$ associated with $\bot\in S_{CM}$ is the function that assigns $\varnothing$ to every partition.
With this definition an injection $i:S_{CM}\to F$ has been defined by
\begin{equation}
	i:(P,\Delta)\mapsto S_{(P,\Delta)}.
\end{equation}

The disjunctions in the form of (\ref{disjx}) can now officially be defined in terms of these functions:
\begin{equation}\label{disjcm}
	\left(\bigvee_{A\in O}S_{(P_A,\Delta_A)}\right)(P):=\bigcup_{A\in O}\left(S_{(P_A,\Delta_A)}(P)\right).
\end{equation}
The set of all functions of this form forms a subset of $F$ and it is given by
\begin{equation}
	L_{CM}:=\{S:L_{\mathpzc{Obs}}\to \mathcal{P}(\Omega)\:;\: S\in F\text{ and } S(P_1)\subset S(P_2) \text{ whenever }P_1\geq P_2\}.
\end{equation} 
Indeed, it is straight forward to check that the construction (\ref{disjcm}) always leads to an element of $L_{CM}$. 
Conversely, every element of $L_{CM}$ is of this form since every $S\in L_{CM}$ satisfies
\begin{equation}
	S=\bigvee_{P\in L_{\mathpzc{Obs}}}S_{(P,S(P))}.
\end{equation}
That is, every element of $L_{CM}$ is associated with a disjunction of (equivalence classes of) elementary propositions.
The set $L_{CM}$ is turned into a complete distributive lattice by the following definitions:
\begin{subequations}
\begin{equation}
	S_1\leq S_2 \Longleftrightarrow S_1(P)\subset S_2(P) \forall P\in L_{\mathpzc{Obs}},
\end{equation}
\begin{equation}
	\left(\bigvee_{i\in I}S_i\right)(P):=\bigcup_{i\in I}S_i(P),
\end{equation}
\begin{equation}
	\left(\bigwedge_{i\in I}S_i\right)(P):=\bigcap_{i\in I}S_i(P).
\end{equation}
\end{subequations}
Infinite distributivity follows from the infinite distributivity of the Boolean lattice of subsets of $\Omega$. 
The bottom element is given by $S_\bot$ and the top element is given by $S_\top$ which assigns to each partition the set $\Omega$. 
Finally, $L_{CM}$ is turned into a Heyting algebra by defining the relative pseudo complement
\begin{equation}
	S_1\to S_2:=\bigvee\{S\in L_{CM}\:;\:S\wedge S_1\leq S_2\}.
\end{equation}

It is easy to check that the embedding function $i$ respects the partial order and preserves arbitrary meets:
\begin{equation}
\begin{gathered}
	(P_1,\Delta_1)\leq (P_2,\Delta_2) \Longleftrightarrow S_{(P_1,\Delta_1)}\leq S_{(P_2,\Delta_2)},\\
	S_{(P_1,\Delta_1)\wedge (P_2,\Delta_2)}= 	S_{(P_1,\Delta_1)}\wedge S_{(P_2,\Delta_2)}.
\end{gathered}
\end{equation}
And this turns $L_{CM}$ into a proper extension of $S_{CM}$.

Strictly speaking, the top element $S_\top$ of $L_{CM}$ corresponds with the measurement of any observable taking only one possible outcome e.g. $f_\top:\Omega\to\{1\}$. 
A measurement of such an observable may be considered to be not a measurement at all, for it doesn't require any physical act to obtain the value of this observable; it is already given by the laws of the theory.  
In the case where $\mathpzc{Obs}$ is generated by a single observable $A$ with $V_A=\{0,1\}$, this lattice reduces to the one in figure \ref{diagram2}. 
To see this note that the lattice of partitions is given by
\begin{equation}
	L_{\mathpzc{Obs}}=\{P_A,P_0\}, 
\end{equation}
with $P_A$ the partition of $\Omega$ generated by $A$ given by
\begin{equation}
	P_A=\{\Delta_0=f_A^{-1}(\{0\}),\Delta_1=f_A^{-1}(\{1\})\}
\end{equation}
and $P_0$ the trivial partition $\{\Omega\}$.
Notating the elements of $L_{CM}$ as pairs $(S(P_0),S(P_A))$, one finds that 
\begin{equation}
	L_{CM}=\{\bot=(\varnothing,\varnothing),(\varnothing,\Delta_0),(\varnothing,\Delta_1),(\varnothing,\Omega),(\Omega,\Omega)=\top\},
\end{equation} 
which is precisely the lattice of figure \ref{diagram2}.

It takes some time to appreciate the complexity of $L_{CM}$.
Obviously, assuming the standard ontology for classical mechanics is much more convenient in every day life.
But now imagine (if you can) a person unaware of any realist interpretation of classical mechanics.
Then his/her logic for reasoning in classical mechanics is likely to resemble $L_{CM}$.
In fact, this is the situation we find ourselves in with respect to quantum mechanics.
With the difference of not having a formal logic $L_{QM}$.
So what is again the advantage of such a logic?
First of all, it makes clearer what is speakable in quantum mechanics independent of any realist interpretation.
And secondly, realist interpretations can be compared with respect to their simplification of $L_{QM}$.
It makes clear the explaining role of the interpretation.
For example, consider meeting the aforementioned person.
At first you are confused about his/her form of reasoning.
But then you recognize the form of $L_{CM}$ and you see that the reasoning is correct, but just very cumbersome.
And then you can explain that propositions like $S_{(P_1,\Delta)}$ and $S_{(P_2,\Delta)}$ actually are equivalent.
And then this person can reflect about whether or not this interpretation is satisfactory.
Are the philosophical consequences of the interpretation satisfactory or not?
I will return to this discussion in section \ref{discussie}.  

\section{Quantum Mechanics}\label{IQL}
For the definition of $L_{QM}$ I will follow an approach similar to the one for classical mechanics.
But of course, since it is a different theory, along the line differences in the logic will emerge.
In quantum mechanics a system is associated with a C*-algebra $\mathcal{C}$ and each observable $A$ is associated with a self-adjoint operator $\hat{A}$ within this algebra. 
For convenience I will assume that every self-adjoint operator is associated with an observable. 
In section \ref{discussie} it will follow that this assumption can be relaxed. 
Also, for mathematical convenience, I will only consider finite dimensional C*-algebras. 
More general cases can be studied, but the necessary mathematical care involved would soon blur the discussion. 
The set of possible measurement outcomes for an observable $A$ is given by the spectrum $\sigma(\hat{A})$ of the operator $\hat{A}$. 
The set of elementary propositions for quantum mechanics is thus given by
\begin{equation}
	EP_{QM}:=\{M_A(\Delta)\:;\:\hat{A}=\hat{A}^*, \Delta\subset\sigma(\hat{A})\}.
\end{equation}

As with the classical case, both elements of the pair that constitute an elementary proposition bring along a mathematical structure. 
This then imposes a preorder on $EP_{QM}$. 
The correct preorder depends on the interpretation of the elementary propositions.

A measurement of any observable $A$ may also count as a measurement of the observable obtained by applying a function $f$ to the outcome of the measurement of $A$. 
This observable should again be associated with a self-adjoint operator. 
Luckily such an operator can be found easily by applying $f$ to the operator associated with $A$.\footnote{Formally, the function of an operator isn't well-defined for arbitrary $f$, but it is for a huge class off functions such as Borel functions.} 
For every observable $A$ there is a unique Abelian sub-algebra $\mathcal{A}\subset\mathcal{C}$ in which all self-adjoint elements are functions of the operator associated with $A$, and every function of $A$ is an element of $\mathcal{A}$. 
Thus a measurement of $A$ can be associated with a measurement of all the observables whose corresponding operator lies in $\mathcal{A}$. 
Every Abelian sub-algebra of $\mathcal{C}$ is of this form. 
Two observables are considered equivalent if they generate the same algebra. 
The set of all Abelian sub-algebras is denoted $\mathfrak{A}(\mathcal{C})$.

Like with the partitions of $\Omega$ in the classical case, the Abelian sub-algebras of $\mathcal{C}$ form a partial ordered set by taking set inclusion as partial order. 
It should be noted that, like in the earlier example of incompatible observables, this poset is not a lattice. 
A meet is still defined by taking the intersection, but in general for a pair of Abelian algebras $\mathcal{A}_1$ and $\mathcal{A}_2$ there is no Abelian algebra containing both. 
In fact, this is only the case if the algebras commute: $[\mathcal{A}_1,\mathcal{A}_2]=0$. 
That is, every element of $\mathcal{A}_1$ commutes with every element of $\mathcal{A}_2$.
It is important to note that the partial order is reversed in comparison with the classical case. 
That is, for two observables $A_1$ and $A_2$ with $A_2=f(A_1)$ one has $P_{A_1}\leq P_{A_2}$ in the classical case and $\mathcal{A}_2\subseteq\mathcal{A}_1$ in the quantum case.

In quantum mechanics the set of possible measurement outcomes $V_A$ coincides with the spectrum of the operator associated with $A$. 
Furthermore, every subset $\Delta$ of $V_A$ can be associated with a projection operator $\mu_{\hat{A}}(\Delta)\in\mathcal{A}$, where $\mu_{\hat{A}}$ is the spectral measure for $\hat{A}$. 
These operators correspond with the observable associated with applying the function that assigns the value $1$ to elements of $\Delta$ and the value $0$ to the complement of $\Delta$ in $V_A$ to the outcome of a measurement of $A$. 
All projection operators in $\mathcal{A}$ are of this form and their set is denoted $\mathcal{P}(\mathcal{A})$. 
The analogy with the classical case thus far is summarized as follows:
\begin{eqnarray}
f_A:\Omega\to V_A & \leftrightarrow & \hat{A}\in\mathcal{C},\quad \hat{A}=\hat{A}^*\\
P_A\in L_{\mathpzc{Obs}} & \leftrightarrow & \mathcal{A}\in\mathfrak{A}(\mathcal{C})\\
\{\Delta\subset\Omega\:;\:P_A\leq\{\Delta,\Delta^c\}\} & \leftrightarrow & \mathcal{P}(\mathcal{A})
\end{eqnarray}

In orthodox quantum logic the partial order structure on the set of observables is completely ignored. 
There, two elementary propositions $M_{A_1}(\Delta_1)$ and $M_{A_2}(\Delta_2)$ are considered to be equal iff $\mu_{\hat{A}_1}(\Delta_1)=\mu_{\hat{A}_2}(\Delta_2)$. 
It thus assumes the partial order $M_{A_1}(\Delta_1)\leq M_{A_2}(\Delta_2)$ iff $\mu_{\hat{A}_1}(\Delta_1)\leq\mu_{\hat{A}_2}(\Delta_2)$. 
In other words, the proposition $M_{A_1}(\Delta_1)$ is being equated to the proposition $M_{\mu_{\hat{A}_1}(\Delta_1)}(\{1\})$. 
The underlying thought is most likely that both propositions reveal the same information about the actual state of the system. 
But although that interpretation may be consistent for classical mechanics, in quantum mechanics it is controversial. 
In fact, there aren't many people left who think this line of ontological reasoning leads to satisfactory results.
That is the tale of quantum logic \cite{Maudlin05}.

One of my main objections is that orthodox quantum logic doesn't take into account the notion of incompatible observables. 
Indeed, it may well be the case that $\mu_{\hat{A}_1}(\Delta_1)=\mu_{\hat{A}_2}(\Delta_2)$ even when $A_1$ and $A_2$ do not commute. 
It seems hardly appropriate that two propositions can be considered equivalent even if they are about mutually exclusive measurements. 
Adding the requirement $[\hat{A}_1,\hat{A}_2]=0$ as being necessary for equivalence isn't enough, for it conflicts with associativity: 
\begin{equation}
	M_{A_1}(\Delta_1)\sim M_{\mu_{\hat{A}_1}(\Delta_1)}(\{1\})\sim M_{\mu_{\hat{A}_2}(\Delta_2)}(\{1\})\sim M_{A_2}(\Delta_2),
\end{equation}
while $M_{A_1}(\Delta_1)\nsim M_{A_2}(\Delta_2)$.

The solution is to take the notion of measurement seriously, leading to the definition
\begin{equation}
	M_{A_1}(\Delta_1)\sim M_{A_2}(\Delta_2)\quad\Longleftrightarrow\quad \mathcal{A}_1=\mathcal{A}_2\text{ and }\mu_{\hat{A}_1}(\Delta_1)=\mu_{\hat{A}_2}(\Delta_2),
\end{equation}
where it is assumed that $\Delta_1$ and $\Delta_2$ are not empty. 
Propositions of the form $M_A(\varnothing)$ are again identified with contradiction.
The set of (equivalence classes of) elementary propositions is characterized by the set
\begin{equation}
	S_{QM}:=\{(\mathcal{A},P)\:;\:\mathcal{A}\in\mathfrak{A}(\mathcal{C}), P\in\mathcal{P}(\mathcal{A}), P\neq0\}\cup\{\bot\},
\end{equation}
with injection $M_A(\Delta)\mapsto(\mathcal{A},\mu_{\hat{A}}(\Delta))$. 
The partial order is defined analogously to the classical case:
\begin{equation}
	(\mathcal{A}_1,P_1)\leq (\mathcal{A}_2,P_2)\Longleftrightarrow (\mathcal{A}_2\subset\mathcal{A}_1\text{ and }P_1\leq P_2)\text{ or } P_1=0,
\end{equation}
where all elements of the form $(\mathcal{A},0)$ are considered equivalent and equal to $\bot$.

Like in the classical case, it should be investigated how disjunctions and conjunctions of elementary propositions behave.
Now again it is on the wish list to associate the conjunction $M_{A_1}(\Delta_1)\wedge M_{A_2}(\Delta_2)$ with a proposition concerning the simultaneous measurement of $A_1$ and $A_2$.
In quantum mechanics this is troublesome if $[\hat{A_1},\hat{A_2}]\neq0$.
In fact, I propose that in this case $M_{A_1}(\Delta_1)\wedge M_{A_2}(\Delta_2)$ simply expresses a contradiction.
On the other hand, if $\hat{A_1}$ and $\hat{A_2}$ do commute, the conjunction has a clear meaning as an elementary proposition:
\begin{equation}\label{QMconj}
	(\mathcal{A}_1,P_1)\wedge (\mathcal{A}_2,P_2)=\begin{cases}
	(\mathcal{A}_1\vee\mathcal{A}_2,P_1\wedge P_2), & [\mathcal{A}_1,\mathcal{A}_2]=0\\
	\bot, & \text{else},
	\end{cases}
\end{equation}
where $\mathcal{A}_1\vee\mathcal{A}_2$ is the smallest Abelian sub-algebra that has both $\mathcal{A}_1$ and $\mathcal{A}_2$ as a subset.
However, for disjunctions one faces the same difficulties as for the classical case and the set of propositions has to be expanded.
For this I take a similar approach as for the classical case.

A measurement of $A_1$ implies a measurement of $A_2$ iff there is a function $f$ such that $A_2=f(A_1)$. 
This is the case iff $\mathcal{A}_2\subset\mathcal{A}_1$. 
A proposition $(\mathcal{A},P)$ is thus equivalent with the disjunction of all propositions $(\mathcal{A}',P')$ with $\mathcal{A}\subset\mathcal{A}'$ and $P'\leq P$. 
This is then again equivalent to the disjunction of all propositions $(\mathcal{A}',P)$ with $\mathcal{A}\subset\mathcal{A}'$. 
The proposition $(\mathcal{A},P)$ can thus again be identified with a function $S_{(\mathcal{A},P)}:\mathfrak{A}(\mathcal{C})\to \mathcal{P}(\mathcal{C})$ with
\begin{equation}
	S_{(\mathcal{A},P)}(\mathcal{A}')=\begin{cases} P, &\text{if }\mathcal{A}\subset\mathcal{A}',\\ 0,&\text{else.}\end{cases}
\end{equation}

$S_{(\mathcal{A},P)}$ is interpreted as the disjunction of all propositions $(\mathcal{A}',S(\mathcal{A}'))$ where $\mathcal{A}'$ runs over all Abelian sub-algebras of $\mathcal{C}$. 
By assuming associativity\footnote{Or rather, distributivity between the disjunctions within the interpretation of each $S_i$ and the disjunction between $S_1$ and $S_2$.}, disjunctions of such propositions $S_i:\mathfrak{A}(\mathcal{C})\to \mathcal{P}(\mathcal{C})$ can be formed by taking joins on the level of the projection operators:
\begin{equation}\label{QMdisj}
	(S_1\vee S_2)(\mathcal{A}):=S_1(\mathcal{A})\vee S_2(\mathcal{A}),
\end{equation}
which again defines a function on $\mathfrak{A}(\mathcal{C})$. 
The set of functions that is formed by taking consecutive disjunctions in this manner is given by
\begin{equation}
	L_{QM}:=\left\{S:\mathfrak{A}(\mathcal{C})\to \mathcal{P}(\mathcal{C})\:;\:
	\substack{S(\mathcal{A})\in \mathcal{P}(\mathcal{A})\text{ and}\\
	S(\mathcal{A}_1)\leq S(\mathcal{A}_2)\text{ whenever } \mathcal{A}_1\subset\mathcal{A}_2}\right\}.
\end{equation} 
Indeed, for every index function $I$ one has that $\bigvee_{i\in I}S_{(\mathcal{A}_i,P_i)}$ is an element of $L_{QM}$. 
On the other hand, every element of $L_{QM}$ satisfies
\begin{equation}
	S=\bigvee_{\mathcal{A}\in\mathfrak{A}(\mathcal{C})}S_{(\mathcal{A},S(\mathcal{A}))}.
\end{equation}	 

Obviously, (\ref{QMdisj}) cannot be a join until a partial order has been defined on $L_{QM}$. 
This partial order is inherited from the partial order on $S_{QM}$. 
To say that $S_1\leq S_2$ is to say that for every $\mathcal{A}_1\in\mathfrak{A}(\mathcal{C})$ there exists an $\mathcal{A}_2\in\mathfrak{A}(\mathcal{C})$ such that $(\mathcal{A}_1,S_1(\mathcal{A}_1))\leq(\mathcal{A}_2,S_2(\mathcal{A}_2))$. 
This requires that $\mathcal{A}_1\supset\mathcal{A}_2$ and $S_1(\mathcal{A}_1)\leq S_2(\mathcal{A}_2)$. 
Because $S_2\in L_{QM}$ this implies $S_1(\mathcal{A}_1)\leq S_2(\mathcal{A}_1)$. 
The partial order then obtained is
\begin{equation}
	S_1\leq S_2\quad\Longleftrightarrow\quad S_1(\mathcal{A})\leq S_2(\mathcal{A})~ \forall \mathcal{A}\in\mathfrak{A}(\mathcal{C}).
\end{equation}
It is easy to check that (\ref{QMdisj}) is indeed the join with respect to this order.

The meet on $L_{QM}$ is now given by 
\begin{equation}\label{QMconj2}
	(S_1\wedge S_2)(\mathcal{A}):=S_1(\mathcal{A})\wedge S_2(\mathcal{A}).
\end{equation}
It is consistent with the conjunction for the elementary propositions given by (\ref{QMconj2}):
\begin{equation}
	S_{(\mathcal{A}_1,P_1)\wedge(\mathcal{A}_2,P_2)}=S_{(\mathcal{A}_1,P_1)}\wedge S_{(\mathcal{A}_2,P_2)}.
\end{equation}

The lattice $L_{QM}$ is furthermore a complete distributive lattice by virtue of the lattice structure of the projection operators:
\begin{equation}
\begin{split}
	\left(\bigvee_{i\in I}(\bigwedge_{j\in J} S_{i,j})\right)(\mathcal{A})
	&=
	\bigvee_{i\in I}\left(\left(\bigwedge_{j\in J}S_{i,j}\right)(\mathcal{A})\right)
	=
	\bigvee_{i\in I}\left(\left(\bigwedge_{j\in J}S_{i,j}(\mathcal{A})\right)\right)\\
	&=
	\bigwedge_{j\in J}\left(\left(\bigvee_{i\in I}S_{i,j}(\mathcal{A})\right)\right)
	=
	\bigwedge_{j\in J}\left(\left(\bigvee_{i\in I}S_{i,j}\right)(\mathcal{A})\right)\\
	&=
	\left(\bigwedge_{j\in J}(\bigvee_{i\in I} S_{i,j})\right)(\mathcal{A}).
\end{split}
\end{equation}
Consequently, $L_{QM}$ is turned into a Heyting algebra by introducing the relative pseudo-complement
\begin{equation}
	S_1\to S_2:=\bigvee\{S\in L_{QM}\:;\: S\wedge S_1\leq S_2\}.
\end{equation}
Negation is defined in the standard way as $\neg S:=S\to\bot$.

\section{Discussion}\label{discussie}
The Heyting algebra $L_{QM}$ is not a new logic for quantum mechanics but it was actually already proposed earlier in \cite{CHLS09}. 
Although it was studied there to some extend from a mathematical point of view, no philosophical derivation was given as to why this should be the correct logic for describing quantum systems. 
The only attempt I found at interpreting $L_{QM}$ was the line
\begin{quote}
	``Each element of [$L_{QM}$] corresponds to a ``Bohrified'' proposition, in the sense that to each classical context [$\A\in\mathfrak{A}(\mathcal{C})$] it associates a yes-no question (i.e. an element of the Boolean lattice [$\mathcal{P}(\A)$] of projections in [$\A$]), rather than being a single projection as in standard quantum logic.'' \cite[p. 732]{CHLS09}
\end{quote}
In the present article, an interpretation has been given and I have shown that this interpretation is not only consistent with this logic, but also derived this logic from the interpretation.
A ``Bohrified'' proposition may now be understood as a proposition written as a disjunction of elementary propositions.
However, the elementary propositions as I introduced them (functions of the form $S_{(\A,P)}$) play no role in the article of Caspers et al.
It is an open question if the interpretation given here is the only one consistent with $L_{QM}$.

The classical case results as a special case of the quantum case. 
To see this consider the finest partition of $\Omega$ allowed by the observables $P_{\mathpzc{Obs}}$. 
The set $C_0(P_{\mathpzc{Obs}})$ of all complex valued functions on $P_{\mathpzc{Obs}}$ that vanish at infinity forms an (Abelian) C*-algebra. 
The lattice $L_{QM}$ obtained for this algebra is precisely the lattice $L_{CM2}$. 
The details of this analogy are left to the reader. 

When comparing the lattices $L_{CM}$ and $L_{QM}$ it is rather surprising that, despite their similarity, the first allows a simple modification with the aid of realism, while for the second this is unclear.
At least it is known that the same method for classical mechanics cannot be used for quantum mechanics.
This is a result of the Kochen-Specker theorem which implies that no state space can be defined in which every state dictates the values for all observables in a way consistent with the laws of the theory.
Somewhat peculiarly this hasn't much to do with the specific feature that certain observables cannot be measured simultaneously.
Rather, it has to do with the algebraic structure of the set of observables.
This is explicitly made clear in so-called MKC-models for quantum mechanics \cite{Meyer99}, \cite{Kent99}, \cite{CliftonKent99} and \cite{Hermens11}.
These models can provide an ontology by restricting to a certain subset of observables, most of which still cannot be measured simultaneously.
More popular ontological models like Bohmian mechanics \cite{Bohm52} get by by assuming that every proposition $M_A(\Delta)$ is in fact of the form $M_X(\Delta')$ where $X$ is a special observable denoting the position of some particles (for example, the pointer on a measuring apparatus).
In fact, any advocate of a specific realist interpretation of quantum mechanics will argue that $L_{QM}$ is a very cumbersome logic for reasoning, but many will differ in pointing out which aspect is precisely cumbersome. 

But how cumbersome is the logic in practice?
The lattice $L_{QM}$ is claimed to provide a consistent way to reason about quantum mechanical propositions. 
In most practical cases however, one only considers a finite set of possible measurements and it then seems inappropriate to use propositions that essentially talk about all possible measurements. 
For example, one may ask the question ``what is the appropriate logic if I know I will measure $A$?''
Truth be told, propositions in $L_{QM}$ explicitly refer to the past, but as the future is the future's past one may as well think of the propositions as ones concerning the future.

The natural approach is to say that in the case of an actual measurement a lot of the propositions in $L_{QM}$ may be considered equivalent:
\begin{equation}\label{framework}
	S_1\sim_A S_2\quad\Longleftrightarrow\quad S_1(\mathcal{A})=S_2(\mathcal{A}).
\end{equation}
The set $L_{QM}/\sim_A$ is the Boolean lattice of projection operators in $\mathcal{A}$. 
In other words: given the measurement of $A$, every proposition about outcomes of the measurement becomes decidable. 
So in practice, the intuitionistic logic behaves classically. 
Note that this result is similar to the case of orthodox quantum logic.

In general a measurement of an observable $A$ is not a complete measurement and there are observables of which $A$ is a function. 
So actually a measurement of $A$ only gives certainty about the observables $A'$ with $\mathcal{A}'\subseteq\mathcal{A}$ but not about the observables $A''$ with $\mathcal{A}\subseteq\mathcal{A}''$. 
From this perspective one should introduce the more subtle equivalence relation
\begin{equation}
	S_1\sim'_A S_2\quad\Longleftrightarrow\quad S_1(\mathcal{A}')=S_2(\mathcal{A}')\quad \forall \mathcal{A}'\supset\mathcal{A}.
\end{equation}
If $A$ is a maximal observable this is again the equivalence relation $\sim_A$ but in general it gives rise to a more refined conditional logic
\begin{equation}
	L_{QM}/\sim'_A=\left\{S:\uparrow\mathcal{A}\to\mathcal{P}(\mathcal{C})\:;\:
	\substack{S(\mathcal{A}')\in \mathcal{P}(\mathcal{A}')\text{ and}\\
	S(\mathcal{A}_1)\leq S(\mathcal{A}_2)\text{ whenever } \mathcal{A}_1\subset\mathcal{A}_2}\right\},
\end{equation}
where $\uparrow\mathcal{A}=\{\mathcal{A}'\in\mathfrak{A}(\mathcal{C})\:;\:\mathcal{A}'\supset\mathcal{A}\}$. 
This is again a Heyting algebra (with partial order, join and meet defined analogously to the standard case). 
The decidable elements in this algebra (i.e. the $S$ for which $S\vee\neg S=\top$) are given by the equivalence classes $[S_{(\mathcal{A},P)}]$ with $P\in\mathcal{P}(\mathcal{A})$ i.e. those that correspond with the elementary propositions about measurements of $A$. 
Thus the logic $L_{QM}/\sim'_A$ is the logic in which propositions concerning measurements of $A$ are decidable, but other propositions aren't. 

The decidability of a proposition thus strongly depends on the context in which the proposition is formulated and in general, decidability in one context can't be expected to hold in another context. 
Of course this is already well-known as one may see from Feynman's story on page \pageref{Feyn}.
But with the introduction of $L_{QM}$ his tightrope has become much less of a hand-waving argument.

The equivalence relation $\sim'_A$ was defined by appealing to an uncertainty about what a full measurement of the system would be, but it can also be seen as appealing to a certain notion of locality.
Indeed, given a measurement of $A$, one remains uncertain about any other measurement made on the system possibly at some distance from where the measurement of $A$ was performed. 
Symmetrically, the person performing the distant measurement remains uncertain about the measurement of $A$ (and whether or not it is performed). 
As a consequence, people who are studying the same system at distant locations will in general use different logics describing the system.

As an example consider the standard situation where Alice can choose between two possible measurements $A_1$ and $A_2$ and Bob can choose between two possible measurements $B_1$ and $B_2$. 
Suppose both know from each other that these are the only possible measurements between which they can choose. 
Let $\mathcal{A}_i\mathcal{B}_j$ denote the algebra generated by $\mathcal{A}_i$ and $\mathcal{B}_j$. 
If Alice chooses to measure $A_i$ the appropriate logic for her will be
\begin{equation}
	L_{A_i}=\left\{S:\{\mathcal{A}_i,\mathcal{A}_i\mathcal{B}_1,\mathcal{A}_i\mathcal{B}_2\}\to\mathcal{P}(\mathcal{C})\:;\:\substack{S(\mathcal{A})\in\mathcal{P}(\mathcal{A})\text{ for }\mathcal{A}\in\{\mathcal{A}_i, \mathcal{A}_i\mathcal{B}_1,\mathcal{A}_i\mathcal{B}_2\},\\ S(\mathcal{A}_i)\leq S(\mathcal{A}_i\mathcal{B}_j)\text{ for }j=1,2}\right\}.
\end{equation}
Symmetrically, Bob will use the logic 
\begin{equation}
	L_{B_i}=\left\{S:\{\mathcal{B}_i,\mathcal{A}_1\mathcal{B}_i,\mathcal{A}_2\mathcal{B}_i\}\to\mathcal{P}(\mathcal{C})\:;\:\substack{S(\mathcal{A})\in\mathcal{P}(\mathcal{A})\text{ for }\mathcal{A}\in\{\mathcal{B}_i, \mathcal{A}_1\mathcal{B}_i,\mathcal{A}_2\mathcal{B}_i\},\\ S(\mathcal{B}_i)\leq S(\mathcal{A}_j\mathcal{B}_i)\text{ for }j=1,2}\right\}.
\end{equation}

To properly discuss the connection of these logics with Bell-type inequalities it is necessary to introduce a notion of probability. 
It is beyond the scope of this article to investigate what all consistent possibilities would be for the new quantum logic. 
However, I do want to sketch some ideas.
It does seem natural that for Alice, using the logic $L_{A_i}$, the probabilities provided by the formalism of quantum mechanics that are relevant to her are those that are assigned to the decidable propositions in her logic: those about the possible outcomes for the measurement of $A_i$.
That is, she will assign probabilities to those propositions that are decidable for her.
But for other propositions the situation is not clear-cut.
The proof of theorem \ref{Belllemma} makes more clear what is at stake.
In order for the proof to make sense, one must be able to assign a probability to the proposition $A_1\wedge B_1\wedge (B_2\vee \neg B_2)$.
Although one can identify it with an element in $L_{A_1}$, the probability Alice should assign to it is unknown and may well depend on an underlying ontological model.
However, the structure of $L_{A_1}$ does make clear that it may well be a \emph{lower} value than that what will be assigned to $A_1\wedge B_1$.
Note that this is not necessarily an artifact of $L_{QM}$ being intuitionistic but rather that $\neg B_2$ is not represented in $L_{QM}$ as the negation of $B_2$.
That is, in general one has
\begin{equation}
	S_{(\A,P^\bot)}\leq\neg S_{(\A,P)}
\end{equation}
and no equality. 
Either way, this is the domain where one is to seek a peaceful coexistence between quantum mechanics and some form of locality.

A more recent attempt at finding such a coexistence is delivered by the consistent histories approach \cite{Griffiths11}.
In this approach one only looks at partial logics constructed with equivalence relations of the form (\ref{framework}).
Such a partial logic is called a framework, and one postulates that all reasoning must be performed within one single framework.
It follows roughly from this postulate that no Bell-inequality can be derived since every inequality involves more than one framework.
Now according to Griffiths the choice of a framework is a pure epistemological act; it does not influence the system under consideration.
Therefore, the propositions within such a framework are purely epistemic too, for if they were ontological, a change of framework would influence the system.
For example, a sentence about a property of the system could shift from being true to being meaningless.
It thus seems to me that the consistent histories approach is not capable of properly describing an ontology for quantum systems, which makes the entire discussion of locality (which is, to be sure, an ontological concept) quite meaningless.
Indeed, the logic suggested by Griffiths is quite reminiscent of what I have done here, with the difference that $L_{QM}$ acts on a framework-transcending level; every sentence in $L_{QM}$ is a disjunction of sentences, each formulated within a single framework.

In conclusion, I don't believe that the problems in the foundations of quantum mechanics could vanish by introducing the `correct' logic.
Furthermore, it is not my opinion that $L_{QM}$ should be conceived as the correct logic.
I do think that logic can play an important role in carefully investigating the philosophical problems we face in quantum mechanics.
A careful distinction between empirical and ontological assumptions is mandatory for this, and I think the logic $L_{QM}$ can help in making this distinction more clear.

\section{Acknowledgements}
I could not have written this article in this present form without the positive influences of R. Griffiths, M. P. Seevinck, S. Wolters and one anonymous referee.
\bibliographystyle{newapa}    
\bibliography{referenties}

\end{document}